	\newtheorem{corollary}{Corollary}
	\newtheorem{lemma}{Lemma}
\begin{document}
%
\title{Age of Information in Reservation Multi-Access Networks with Stochastic Arrivals}
%
%
%

\author{\IEEEauthorblockN{Qian Wang and He Chen}
	  \thanks{The work of Q. Wang is supported in part by the Research Talent Hub PiH/456/21 under Project ITS/204/20. The work of H. Chen is supported in part by the Innovation and Technolgy Fund (ITF) under Project ITS/204/20 and the CUHK direct grant for research under Project 4055126.}
	\IEEEauthorblockA{Department of Information Engineering, The Chinese University of Hong Kong, Hong Kong SAR, China} 
	Email: \{qwang, he.chen\}@ie.cuhk.edu.hk}
\maketitle

\begin{abstract}
This paper investigates the Age of Information (AoI) performance of Frame Slotted ALOHA with Reservation and Data slots (FSA-RD). We consider a symmetric multi-access network where each user transmits its randomly generated status updates to an access point in a framed manner. Each frame consists of one reservation slot and several data slots. The reservation slot is made up of some mini-slots. In each reservation slot, users, with a status update packet to transmit, randomly send short reservation packets in one of the mini-slots to contend for data slots of the frame. The data slots are assigned to those users that succeed in reservation slot. To provide insights in optimizing the information freshness of FSA-RD, we manage to derive a closed-form expression of the average AoI under FSA-RD by applying a recursive method. Numerical results validate the analytical expression and demonstrate the influence of the frame size and reservation probability on the average AoI. We finally perform a comparison between the AoI performance of FSA-RD with optimized frame size and reservation probability, and that of slotted ALOHA with optimized transmission probability. The comparison results show that FSA-RD can effectively reduce the AoI performance of multi-access networks, especially when the status arrival rate of the network becomes large.  
\end{abstract}


%

\section{Introduction}
Recent years have witnessed increasing research interest in a new performance metric, Age of Information (AoI), thanks to its capability of quantifying the timeliness of data transmission in status update systems \cite{kaul2012real,yates2017status}. The timeliness of data transmission plays an important role in various Internet of Thing (IoT) applications, particularly in real-time monitoring systems. In these systems, the dynamics of the monitored process should be grasped at the monitor in a timely manner to ensure in-time responses. AoI is defined as the time elapsed since the generation time of the latest received status update at the receiver\cite{kaul2012real}. According to its definition, AoI is jointly determined by the transmission interval, transmission delay, and status generation process.

The trend of massive connectivity of IoT networks \cite{massiot} and the importance of information freshness have recently attracted considerable efforts in optimizing information freshness of multi-access networks. In this context, how to dynamically schedule the data transmission of users in multi-access networks to minimize the network-wide average AoI becomes a critical problem. Polling-based centralized scheduling schemes \cite{kadota2018scheduling,kadota2019minimizing,8437712,bedewy2019age} and contention-based random access schemes \cite{unma,slotaloha,gurandom,xrchen,9488897,maatouk2019minimizing1,modernra} are two major research branches. The sporadic IoT traffic makes contention-based random access schemes more preferable in large-scale networks. This is because centralized scheduling is usually associated with excessive overhead and high operation complexity. In contrast, contention-based multi-access schemes have acceptable overhead with simple operation, and can flexibly adapt to the networks with a varying number of devices.

Previous studies on contention-based multi-access schemes have explored the average AoI of ALOHA and CSMA protocols. The average AoI of slotted ALOHA systems was first characterized in \cite{unma}. Specifically, the ALOHA-alike policy, in which each user transmits its status updates with a fixed transmission probability was analyzed and compared with the centralized scheduling policy \cite{unma}. Inspired by this seminal work, age-dependent slotted ALOHA policy was devised and analyzed in \cite{slotaloha,gurandom}, where each user randomly accesses the shared channel only when its instantaneous AoI exceeds a predetermined threshold. Different from previous studies considering \textit{generate-at-will} status generation model, the authors in \cite{xrchen,9488897,maatouk2019minimizing1} investigated the contention-based multi-access schemes with stochastic arrival models for status updates. In \cite{xrchen,9488897}, status generation at each user is modeled by independent and identically distributed (i.i.d.) Bernoulli process. The analytical AoI performance of a stationary randomized policy was derived, and the asymptotic optimality of slotted ALOHA with small status arrival rate in the regime of infinite users was analyzed in \cite{xrchen}. An age-based thinning method was proposed in \cite{xrchen} to further improve the AoI performance of the slotted ALOHA systems. The authors in \cite{9488897} derived approximate expressions for the average AoI of both slotted ALOHA and CSMA schemes by developing a discrete-time model. The AoI performance of CSMA was also studied in \cite{maatouk2019minimizing1}, where the stochastic hybrid system was applied to derive the accurate average AoI expression for \textit{generate-at-will} model and a tight upper bound of average AoI for stochastic arrival model, respectively. Very recently, the AoI performance of irregular repetition slotted ALOHA was studied and compared with that of slotted ALOHA in \cite{modernra}.

{Apart from the polling-based and contention-based multi-access schemes, there is another type of dynamic allocation schemes, called reservation-based multi-access (R-MA) \cite{Bing2002}. In each frame of R-MA, one reservation slot, consisting of multiple mini-slots, is introduced for users to contend for the data slots. Only those users made successful reservations are allowed to transmit in their reserved data slots of the said frame, leading to non-conflicting data transmission \cite{roberts1973dynamic,szpankowski1983analysis,CASARESGINER201915}. Furthermore, R-MA schemes squeeze the potential collisions among users into the shorter mini-slots, reducing the time overhead for contention. In light of these features, R-MA scheme have a great potential to reduce the network-wide AoI. Nevertheless, to the best of the authors' knowledge, the AoI performance of R-MA protocols has not been thoroughly characterized in open literature.}
	

As an attempt to fill the gap, in this paper we investigate the average AoI performance of Frame Slotted ALOHA with Reservation and Data slots (FSA-RD), which is an representative R-MA protocol \cite{roberts1973dynamic,szpankowski1983analysis,CASARESGINER201915}. Specifically, we consider a symmetric multi-access network, where each user transmits its randomly generated status updates to an access point (AP) in a framed manner. The stochastic arrivals of status updates, the randomness in reservation attempts and reservation slot selection (i.e., which slot to reserve), as well as the tangled reservation results (i.e., whether the reservation is successful) altogether make the theoretical analysis of the average AoI (AAoI) for the considered system non-trivial. To provide insights in optimizing the information freshness of FSA-RD, we manage to derive a closed-form expression of the AAoI. Specifically, we focus on evaluating AoI evolution of a particular user to analyze the network-wide AAoI. A recursive method is applied to characterize all possible combinations of status generations, packet preemptions, and reservation attempts for deriving AAoI under FSA-RD. Numerical results are provided to validate the analytical expression of AAoI, and to evaluate the influence of frame size and reservation probability on AAoI. We finally perform a comparison between AAoI of FSA-RD with optimized frame size and reservation probability, and that of slotted ALOHA with optimized transmission probability. Our results show that FSA-RD can substantially reduce the AAoI of multi-access networks, especially when the status arrival rate of the network becomes large. 
\section{System Model}\label{sec2}
\subsection{System Description}
We consider a multi-access network, where $N$ users share a wireless channel to transmit time-sensitive information to an AP in a framed manner. The Frame Slotted ALOHA with Reservation and Data slots scheme (FSA-RD) \cite{roberts1973dynamic,szpankowski1983analysis} is adopted. Specifically, each frame consists of $M$ slots. At the beginning of each time slot, the information source of each user randomly generates a time-stamped status update, which is modeled by a Bernoulli process.  The network is considered to be symmetric, and the status generation probabilities for all users are equal and denoted by $\rho$. Users are frame-synchronized. 

We follow \cite{modernra} and consider
that a newly generated status update is allowed to be transmitted in the subsequent frame. In this context, more than one status updates may arrive at each user before the user being able to access the channel in the next frame. To maintain the information freshness, only the status update most recently generated during a frame will be kept in the buffer and transmitted in the next frame. As such, in any frame $k$, whether a user has a status update to transmit also follows a Bernoulli distribution. The corresponding probability equals to the probability that there is at least one status update generated in previous $M$ slot during frame $k-1$. Let $I_n(k)$ denote the indicator that equals $1$ if user $n$ has one status update to transmit in frame $k$, and equals to $0$ otherwise. We then have $\mathrm{Pr}(I_n(k)=1)=1-(1-\rho)^M$ , and the value of $I_n$ in each frame is i.i.d.  

\begin{figure}[htbp]
	\centering \scalebox{0.5}{\includegraphics{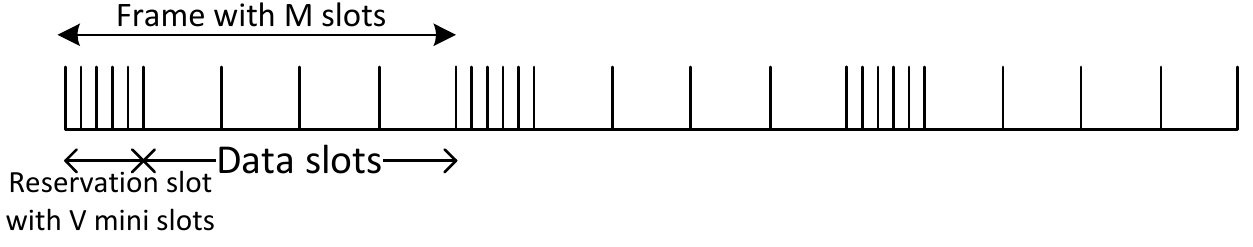}}
	\caption{Frame structure.}
	\label{fig1}
	\vspace{-1.2em}
\end{figure}
For FSA-RD, the $M$ slots in each frame can be divided into one reservation slot and $M-1$ data slots. Specifically, the first slot of each frame is the reservation slot, used for making reservations, while the rest $M-1$ data slots are used for sending status updates, as depicted in Fig. \ref{fig1}. Each reservation slot consists of $V$ mini-slots. At the beginning of each frame, each user with a status update to transmit, will make a reservation with probability $\gamma$ (i.e., ALOHA-alike). Denote by $J_n(k)$ the reservation indicator of user $n$, which equals to $1$ if user $n$ chooses to make a reservation in frame $k$, and equals to $0$ otherwise. That is, $\mathrm{Pr}(J_n(k)=1|I_n(k)=1)=\gamma$. Once user $n$ decides to reserve, it will uniformly choose one of the $V$ mini-slots to send its reservation packet. The transmission of the reservation packet and that of the status update packet are assumed to take $1$ mini-slot and $1$ time slot, respectively, considering that the reservation packet generally contains less information than the status update packet. {If more than one user transmits reservation packets in the same mini-slot, a collision occurs and all reservations made to that mini-slot fail; otherwise, the reservation information will be gathered by the AP. At the end of the reservation slot, the AP will inform all users of reservations results\footnote{For simplicity, we ignore the duration of the feedback from AP.}}. Upon the reservation results, the first $M-1$ successfully reserved users will take turns to transmit their status updates in the data slots in the order of successful reservations. Similarly, the status update packets will be successfully received by the AP as there is no collision. Note that at most $M-1$ users can successfully update their statuses in each frame as there are totally $M-1$ data slots. If less than $M-1$ users make a successful reservation, the unreserved data slot(s) will be wasted in the current frame. In this sense, the frame size $M$ should be neither too large nor too small\footnote{The investigation for the case with variable frame length has been left as a future work.}.

\subsection{The Evolution of AoI}
The AoI of user $n$, denoted by $\delta_n(t)$, measures the timeliness of the status updates from the perspective of the AP, which is defined as the time elapsed since the generation time of the most recently received status update from user $n$ at the AP. Mathematically, the AoI $\delta_n(t)$ at time $t$ is $t-u_n(t)$, where $u_n(t)$ denotes the generation time of the latest received status update of user $n$ at the AP until time $t$. 
\begin{figure}[htbp]
	\centering \scalebox{0.4}{\includegraphics{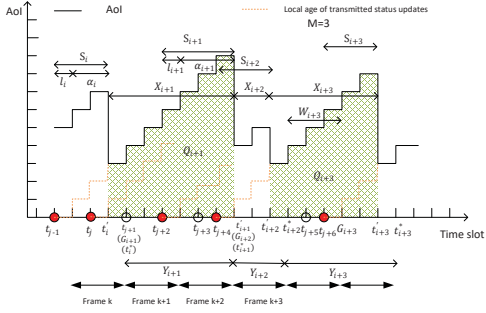}}
	\caption{The evolution of AoI with $M=3$. Each circle indicates the generation of a status update in the corresponding time slot, while the circles in red are those status updates that are transmitted. The black solid line is the AoI curve and the orange dotted line is the local age of transmitted status updates.}
	\label{fig2}
	\vspace{-1.2em}
\end{figure}

The black solid line in Fig. \ref{fig2} illustrates one example of how the AoI of user $n$ at the AP evolves in the considered system. As it can see, the AoI linearly increases until the AP receives a status update from user $n$, when it is reset to the service time of the received status update. The service time is the difference between the reception time of a new status at the AP and its generation time at user $n$. Denote by $t_j$ the generation time of $j$th status update and by $t_i^{'}$ the time of the reception of $i$th received status. The different indexes in $t_j$ and $t_i^{'}$ is because of the preemption and discard in packet management caused by reservation failures and no-reservation attempts. As shown in Fig. \ref{fig2}, only those status updates that arrived at $t_{j-1}$, $t_{j+2}$, $t_{j+4}$ and $t_{j+6}$, were successfully received by the AP at $t_i^{'}$, $t_{i+1}^{'}$, $t_{i+2}^{'}$ and $t_{i+3}^{'}$, respectively. The status update that arrived at $t_j$ was transmitted in frame $k+1$, but it might suffer from either reservation failure or no-reservation attempt, and thus was discarded. The remaining status updates were replaced by late arrived status updates. According to the definition of $u(t)$, the service time of $i$th received status update can be expressed as $S_i=t_i^{'}-u(t_i^{'})$. Recall that each status update can be transmitted only in the next frame. As such, status update packets may need to wait in the buffer before being transmitted. We further introduce the need to define what is local age of transmitted status updates, denoted by $g_n(t)$. If the AP receives a status update transmitted from user $n$ in time slot $t$, then $\delta_n(t+1)=g_n(t)+1$; otherwise, $\delta_n(t+1)=\delta_n(t)+1$. Other variables showed in Fig. \ref{fig2} will be explained in the following section for further evaluation of AoI.%
\section{Analysis of AAoI}
 We note that the AoI of each user is identically distributed due to the symmetric network setup. As such, in this section we focus on analyzing the AAoI of one particular user to represent the network-wide AAoI performance and omit user index for brevity. To derive the analytical expression of AAoI, we first elaborate some useful definitions for the considered system, inspired by the analytical framework presented in \cite{gu2019timely}. 
 
 {We define $t_{i}^{*}$ as the end of the frame within which the $i$th status update is successfully received at the AP, which is also the beginning of the next frame after $i$th reception of status updates. As the status updates randomly arrive at each user and are transmitted in the next frame, we define the waiting time $W_i$ as the time elapsed from $t_{i-1}^{*}$, until the beginning of the first frame when the user has a status update to transmit after $t_{i-1}^{*}$, denoted by $G_{i}$. 
We thus have $W_i=G_i-t_{i-1}^{*}$. Due to the frame-based  transmission structure, $W_i$ takes value from $\{0,M,2M,...\}$. Recall that whether a user has a status update for transmission in each frame  follows a Bernoulli distribution with parameter $p=1-(1-\rho)^M$. We then can obtain the probability mass function (PMF) of $W_i$, given by $\mathrm{Pr}\{W_i=xM\}=(1-p)^xp$, and we further have 
\begin{equation}\label{eqw}
\mathbb{E}[W_i]=(1-p)M/p,
\end{equation}
\begin{equation}\label{eqw2}
\mathbb{E}[W_i^2]=(p^2-3p+2)M^2/p^2.
\end{equation} We also define $K_i$ as the time interval from $G_i$ until the end of the frame within which the $i$th status update is received at the AP, i.e., $K_i=t_{i}^{*}-G_i$. We further define $Y_i$ as the time interval between the ends of two frames within which the $(i-1)$th and $i$th status updates are received at the AP, respectively, i.e., $Y_i=t_{i}^{*}-t_{i-1}^{*}$. Together with the definitions of $W_i$ and $K_i$, we have $Y_i=W_i+K_i$.}

Let $X_i$ denote the inter-departure time of two successive correctly received status updates at the AP, i.e., $X_i=t_{i}^{'}-t_{i-1}^{'}$.
Let $N_t$ denote the number of status updates that have been successfully received by the AP until time $t$. Similar to \cite[Eq 2]{kaul2012real}, AAoI can be expressed as 
\begin{equation}
\label{eqa11}
\bar{\Delta}=\lim_{t\rightarrow \infty}\frac{N_t}{t}\frac{1}{N_t}\sum_{i=1}^{N_t}Q_i=\frac{\mathbb{E}[Q_i]}{\mathbb{E}[X_i]},
\end{equation} where $Q_i$ is the polygon area as depicted in Fig. \ref{fig2}. The area of $Q_i$ can be calculated as
\begin{equation}
\label{eqa12}
Q_i=S_{i-1}+S_{i-1}+1+...+S_{i-1}+X_i-1={\left(X_{i}^2-X_{i}\right)}/{2}+S_{i-1}X_i.
\end{equation} By taking the expectations on both sides of \eqref{eqa12} {and substituting the expectations into \eqref{eqa11}, we have
\begin{equation}
\label{eqa13*}
\bar{\Delta}=\frac{\mathbb{E}[X_i^2]}{2 \mathbb{E}[ X_i]}+\frac{\mathbb{E}[S_{i-1}X_i]}{\mathbb{E}[ X_i]}-\frac{1}{2}.
\end{equation} 

According to the status generation model and transmission model, $S_{i-1}$ consists of two time intervals that a successfully received status update has experienced in its non-preemptive generation frame and transmission frame until its reception, denoted by $l_{i-1}$ and $\alpha_{i-1}$, respectively. Mathematically, $S_{i-1}=l_{i-1}+\alpha_{i-1}$, where $l_{i-1}$ describes the status generation process while $\alpha_{i-1}$ characterizes the status transmission process. As such, $l_{i-1}$ and $\alpha_{i-1}$ are independent. Meanwhile, the inter-departure time ($X_{i}$) between the $(i-1)$th and $i$th receptions of status updates consists of three parts: the remaining time slots in the transmission frame of the $(i-1)$th received status update since its reception (i.e., $M-\alpha_{i-1}$); the frames without successful status transmission since the $(i-1)$th reception of status updates (i.e., $Y_{i}-M$); and the time slots in the transmission frame of the $i$th received status update until its reception (i.e., $\alpha_{i}$). That is, $X_{i}=M-\alpha_{i-1}+Y_{i}-M+\alpha_{i}$. Recall that $\alpha_{i-1}$ and $\alpha_{i}$ are the time intervals that the $(i-1)$th and $i$th received status updates spent in their transmission frame until reception, and thus they are i.i.d. Thus, we have $\mathbb{E}[X_i]=\mathbb{E}[Y_i]$.

Besides, each status update can be transmitted at most once in the next frame after its generation and all transmissions of one user are independent. As such, the unsuccessful frame interval $Y_{i}-M$ is also independent of $\alpha_{i-1}$ and $\alpha_{i}$ as well as $l_{i-1}$. Together with the i.i.d. property of the sequence of random variable $\{\alpha_i\}$, we have
\begin{equation}\label{ex2}
\mathbb{E}[X_i^2]=\mathbb{E}[Y_i^2]+2\mathrm{Var}(\alpha),
\end{equation}
\begin{equation}\label{esx}
\mathbb{E}[S_{i-1}X_i]=\mathbb{E}[S_{i-1}]\mathbb{E}[Y_i]-\mathrm{Var}(\alpha),
\end{equation} where $\mathrm{Var}(\alpha)$ is the variance of random variable $\alpha_i$ ($\alpha_{i-1}$ also). By substituting \eqref{ex2} and \eqref{esx} into \eqref{eqa13*}, together with $\mathbb{E}[X_i]=\mathbb{E}[Y_i]$, we have 
\begin{equation}
\label{eqa13}
\bar{\Delta}=\frac{\mathbb{E}[Y_i^2]}{2 \mathbb{E}[ Y_i]}+\mathbb{E}[S_{i-1}]-\frac{1}{2}.
\end{equation} Due to the i.i.d. property of the two sequences of random variables $\{S_i\}$ and $\{Y_i\}$, we hereafter omit the subscripts of $S_{i-1}$ and $Y_i$, and calculate $\mathbb{E}[S]$, $\mathbb{E}[Y]$ and $\mathbb{E}[Y^2]$ for brevity.}
\subsection{ Evaluation of $\mathbb{E}[S]$}
We first calculate the expected service time $\mathbb{E}[S]$. Service time only counts those successfully transmitted status updates. To proceed, we analyze the successful status update probability for user $n$, denoted by $p_s$, once it decides to make a reservation in a frame when a status update is available to transmit. Let $I_a^n(k)=1$ denote the successful reception of a status update from user $n$ at the AP within frame $k$ . Then we have $p_s=\mathrm{Pr}\left(I_a^n=1|J_n=1, I_n=1\right)$, where we omit the frame indexes of $J_n$, $I_n$ and $I_a^n$, as they are frame-independent. The following lemma gives a closed-form expression of $p_s$,
\begin{lemma}\label{lem1}
	$p_s=\sum_{n_1=0}^{N-1}\sum_{n_2=0}^{n_1}\sum_{n_3=1}^{\min\{V,n_2+1\}}C_{N-1}^{n_1}(1-p)^{N-1-n_1}p^{n_1}C_{n_1}^{n_2}(1-\gamma)^{n_1-n_2}\gamma^{n_2}\frac{\min\{n_3,M-1\}}{n_2+1}\frac{(-1)^{n_3}V!(n_2+1)!}{V^{(n_2+1)}n_3!}\times\\
	\left\{\sum_{m=n_3}^{\min\{V,(n_2+1)\}}\frac{(-1)^m(V-m)^{(n_2+1)-m}}{(m-n_3)!(V-m)!(n_2+1-m)!}\right\}
	$.
\end{lemma}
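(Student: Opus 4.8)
The plan is to condition on the random configuration of the other $N-1$ users and thereby reduce the event $\{I_a^n=1\}$ to a pure balls-into-mini-slots occupancy problem. First I would introduce two auxiliary counts: $n_1$, the number of the other $N-1$ users that have a status update in the frame, and $n_2$, the number of those $n_1$ users that additionally decide to reserve. Since each of the other users is active independently with probability $p$ and, if active, reserves independently with probability $\gamma$, these two layers contribute exactly the binomial factors $C_{N-1}^{n_1}(1-p)^{N-1-n_1}p^{n_1}$ and $C_{n_1}^{n_2}(1-\gamma)^{n_1-n_2}\gamma^{n_2}$. Under the conditioning event $\{J_n=1,I_n=1\}$ user $n$'s own reservation is fixed, so there are precisely $n_2+1$ contenders, each independently and uniformly choosing one of the $V$ mini-slots.

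The core of the proof is to evaluate, for a fixed number $n_2+1$ of contenders, the probability that user $n$ eventually transmits. I would introduce a third variable $n_3$, the number of mini-slots receiving exactly one reservation packet, i.e. the number of successful reservations. Since the $n_2+1$ contenders fall into $V$ mini-slots independently and uniformly, $\mathrm{Pr}\{n_3\text{ singletons}\}$ is the classical occupancy probability of obtaining precisely $n_3$ singleton cells from $n_2+1$ balls in $V$ cells, which I would compute by inclusion--exclusion. Writing $B_i$ for the event that mini-slot $i$ is a singleton, the probability that a prescribed set of $m$ mini-slots are all singletons is $\frac{(n_2+1)!}{(n_2+1-m)!}(V-m)^{n_2+1-m}/V^{n_2+1}$ (assign $m$ distinct balls to the $m$ cells, then send the remaining $n_2+1-m$ balls into the other $V-m$ cells). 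Summing over the $C_V^m$ such sets and applying the inversion $\sum_{m\ge n_3}(-1)^{m-n_3}C_m^{n_3}C_V^m(\cdot)$ that converts the ``at least $m$'' counts into the ``exactly $n_3$'' count yields, after expanding $C_m^{n_3}C_V^m=\frac{V!}{n_3!(m-n_3)!(V-m)!}$ and pulling the $m$-independent factors out of the sum, precisely the bracketed expression multiplied by $\frac{(-1)^{n_3}V!(n_2+1)!}{V^{(n_2+1)}n_3!}$. The falling factorial $\frac{(n_2+1)!}{(n_2+1-m)!}$ vanishes once $m>n_2+1$, which is why the upper limit collapses to $\min\{V,n_2+1\}$.

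It then remains to relate $n_3$ to user $n$ actually occupying a data slot. Given $n_3$ successful reservations, exactly $\min\{n_3,M-1\}$ users are granted data slots, since only $M-1$ data slots exist and successful reservers are served in reservation order. The key observation is that the $n_2+1$ contenders are exchangeable: the mini-slot choices are i.i.d. uniform, and $n_3$ is a symmetric function of these choices, so conditioning on $n_3$ does not distinguish user $n$ from any other contender (including those who collide). Hence, writing $T_j$ for the indicator that contender $j$ is served and using $\sum_{j=1}^{n_2+1}T_j=\min\{n_3,M-1\}$, symmetry forces $\mathbb{E}[T_j\mid n_3]=\min\{n_3,M-1\}/(n_2+1)$ for every $j$, and in particular for user $n$. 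This is exactly the weight $\frac{\min\{n_3,M-1\}}{n_2+1}$. Multiplying by $\mathrm{Pr}\{n_3\text{ singletons}\}$, summing over $n_3$ (the $n_3=0$ term drops out, so the sum may start at $1$), and averaging over $n_1,n_2$ assembles the claimed closed form.

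I expect the occupancy inclusion--exclusion to be the main obstacle, both in getting the ``no extra balls in the designated cells'' bookkeeping right and in massaging the resulting alternating double sum into the exact factored shape stated in the lemma. The exchangeability step producing the clean $\min\{n_3,M-1\}/(n_2+1)$ weight is conceptually delicate, since the denominator counts \emph{all} contenders rather than only the successful ones, but it is computationally short once the symmetry over the full set of $n_2+1$ contenders is stated carefully.
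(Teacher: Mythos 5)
Your proposal is correct and follows essentially the same route as the paper: conditioning on the number of active users $n_1$, the number of reservers $n_2$, and the number of singleton mini-slots $n_3$, then using the exchangeability of the $n_2+1$ contenders to obtain the conditional success probability $\min\{n_3,M-1\}/(n_2+1)$ and assembling everything with the law of total probability. The only difference is that you derive the singleton-occupancy distribution by inclusion--exclusion from scratch, whereas the paper simply cites it from Szpankowski's analysis (their Eq.\ 6); your derivation matches that formula exactly, so this is a self-contained filling-in rather than a different approach.
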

\begin{proof}
	Note that $p_s$ depends on the number of users excluding user $n$ that decide to make reservations in a certain frame, denoted by $N_r$. We have $N_r\in\{0,1,...,N_g\}$, where $N_g$ denotes the number of users excluding user $n$ that have a status update packet to transmit, and thus $N_g\in\{0,1,...,N-1\}$. According to the status generation model and reservation scheme, the PMF of $N_g$ and the conditional PMF of $N_r$ given $N_g=n_1$ can be expressed as  $\mathrm{Pr}\left\{N_g=n_1\right\}=C_{N-1}^{n_1}(1-p)^{N-1-n_1}p^{n_1}$ and $\mathrm{Pr}\left(N_r=n_2|N_g=n_1\right)=C_{n_1}^{n_2}(1-\gamma)^{n_1-n_2}\gamma^{n_2}$, respectively. Together with user $n$, there will be $N_r+1$ users randomly selecting one of the $V$ min-slots to send a short reservation packet. The number of min-slots that are reserved by one single user, denoted by $N_s$, depends on the total number of reservation users.  According to \cite[Eq. 6]{szpankowski1983analysis}, we have $\mathrm{Pr}\left(N_s=n_3|N_r=n_2\right)=\frac{(-1)^{n_3}V!(n_2+1)!}{V^{(n_2+1)}n_3!}\sum_{m=n_3}^{\min\{V,(n_2+1)\}}\frac{(-1)^m(V-m)^{(n_2+1)-m}}{(m-n_3)!(V-m)!(n_2+1-m)!}$ denoting the probability that $n_2+1$ users successfully reserve $n_3$ mini-slots. Due to the identical reservation scheme of each user and the limitation of $M-1$ data slots, the successful reservation probabilities of the $n_2+1$ users are the same. For any of the $n_2+1$ users, the probability of successful status update is $\frac{\min\{n_3,M-1\}}{n_2+1}$, i.e., $\mathrm{Pr}\left\{I_a^n=1|N_s=n_3,N_r=n_2\right\}=\frac{\min\{n_3,M-1\}}{n_2+1}$. Based on all the above analysis and the law of total probability, we arrive at the expression of $p_s$ given in Lemma 1. This completes the proof.
\end{proof}
{{The following corollary characterizes the probability that one user successfully transmits its status update in the reserved $(\alpha-1)$th data slot, denoted by $\varphi_\alpha$, where $\alpha\in\{2,...,M\}$ and $\sum_{\alpha=2}^M\varphi_{\alpha}=p_s$. The proof is omitted as it can be directly inferred from Lemma \ref{lem1}.
\begin{corollary}\label{co1}
		$\varphi_\alpha=\sum_{n_1=0}^{N-1}\sum_{n_2=0}^{n_1}\sum_{n_3=\alpha-1}^{\min\{V,n_2+1\}}C_{N-1}^{n_1}(1-p)^{N-1-n_1}p^{n_1}C_{n_1}^{n_2}(1-\gamma)^{n_1-n_2}\gamma^{n_2}\frac{1}{n_2+1}\frac{(-1)^{n_3}V!(n_2+1)!}{V^{(n_2+1)}n_3!}\times\\
	\left\{\sum_{m=n_3}^{\min\{V,(n_2+1)\}}\frac{(-1)^m(V-m)^{(n_2+1)-m}}{(m-n_3)!(V-m)!(n_2+1-m)!}\right\}
	$.
\end{corollary}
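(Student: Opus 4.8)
The plan is to obtain $\varphi_\alpha$ by revisiting the conditioning hierarchy established in the proof of Lemma~\ref{lem1} and modifying only its innermost conditional probability. Recall that the proof of Lemma~\ref{lem1} decomposes the event of successful update over three nested random variables: $N_g$, the number of other users with a packet to transmit; $N_r$, the number of those attempting a reservation; and $N_s$, the number of mini-slots reserved by a single user among the $n_2+1$ contenders (user $n$ included). The factors $C_{N-1}^{n_1}(1-p)^{N-1-n_1}p^{n_1}$, $C_{n_1}^{n_2}(1-\gamma)^{n_1-n_2}\gamma^{n_2}$, and the $N_s$-term depend only on the reservation mechanics and are therefore identical for $\varphi_\alpha$ and $p_s$. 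Hence the only quantity I need to recompute is $\mathrm{Pr}\{\text{user }n\text{ transmits in data slot }\alpha-1 \mid N_s=n_3, N_r=n_2\}$.

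First I would make the data-slot assignment rule precise: the $n_3$ users reserving a mini-slot alone are ordered by mini-slot index, and the first $\min\{n_3, M-1\}$ of them are granted data slots $1, 2, \ldots, \min\{n_3, M-1\}$ in that order. User $n$ transmits in data slot $\alpha-1$ precisely when it occupies the $(\alpha-1)$th position in this ordering, which requires $n_3 \geq \alpha-1$ (so that the position exists) together with $\alpha-1 \leq M-1$, i.e. $\alpha \leq M$ (which holds by assumption).

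The key step is a symmetry argument. Since every contending user picks a mini-slot independently and uniformly, the joint law of the outcome is invariant under relabeling the $n_2+1$ users, so these users remain exchangeable even after conditioning on the symmetric event $\{N_s=n_3\}$. For any fixed ordinal position $j \in \{1, \ldots, n_3\}$, exactly one user occupies it, and by exchangeability each of the $n_2+1$ users is equally likely to be that occupant; summing the equal probabilities to one yields $\mathrm{Pr}\{\text{user }n\text{ occupies position }j \mid N_s=n_3, N_r=n_2\} = \frac{1}{n_2+1}$, independent of $j$. Setting $j=\alpha-1$ gives the conditional probability $\frac{1}{n_2+1}$ whenever $n_3 \geq \alpha-1$ and $0$ otherwise. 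This single substitution simultaneously replaces the factor $\frac{\min\{n_3, M-1\}}{n_2+1}$ of Lemma~\ref{lem1} with $\frac{1}{n_2+1}$ and raises the lower limit of the $n_3$-summation from $1$ to $\alpha-1$, reproducing the claimed expression after the law of total probability.

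The main obstacle is justifying the exchangeability claim cleanly, since conditioning on $N_s=n_3$ could a priori correlate which user lands in which position. I expect the cleanest route is the counting-free argument above: rather than tracking the full conditional distribution of occupants, I only need the marginal that user $n$ sits in a given position, which exchangeability pins down to $\frac{1}{n_2+1}$ directly. As a consistency check I would verify $\sum_{\alpha=2}^{M}\varphi_\alpha = p_s$ by swapping the order of the $\alpha$- and $n_3$-summations: for fixed $n_3$ the admissible range $\alpha \in \{2, \ldots, \min\{n_3+1, M\}\}$ contains exactly $\min\{n_3, M-1\}$ terms, so summing $\frac{1}{n_2+1}$ over them recovers precisely the $\frac{\min\{n_3, M-1\}}{n_2+1}$ factor of Lemma~\ref{lem1}.
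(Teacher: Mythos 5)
Your proposal is correct, and it follows exactly the route the paper intends: the paper omits the proof of Corollary~\ref{co1} as ``directly inferred'' from Lemma~\ref{lem1}, which is precisely your argument of reusing the $N_g$, $N_r$, $N_s$ conditioning hierarchy and replacing the factor $\frac{\min\{n_3,M-1\}}{n_2+1}$ by the exchangeability-based probability $\frac{1}{n_2+1}$ that user $n$ occupies the $(\alpha-1)$th reserved position, with the summation lower limit raised to $n_3\geq \alpha-1$. Your consistency check $\sum_{\alpha=2}^{M}\varphi_\alpha=p_s$ is a nice addition that the paper only states without verification.
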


Recall that $S=l+\alpha$, and thus $\mathbb{E}[S]=\mathbb{E}[l]+\mathbb{E}[\alpha]$. 
Due to the independent generation of status updates in each slot, we have $l\in\{1,2,...,M\}$ and the probability that a status update is generated $l$ slots before its transmission frame without preemption is given by $\phi_{l}=\rho(1-\rho)^{l-1}$, which is the product between the probability of one status generation at the $(M-l)$th slot of a frame and the probability of no status generation in the following $l-1$ consecutive slots. We then have
\begin{equation}
\mathbb{E}[l]=\frac{\sum_{l=1}^{M}\phi_{l}l}{\sum_{l=1}^{M}\phi_{l}}=\frac{1}{\rho}-\frac{M(1-\rho)^M}{1-(1-\rho)^M}.
\end{equation}
As for $\mathbb{E}[\alpha]$, based on Corollary \ref{co1}, we have 
$\mathbb{E}[\alpha]={\sum_{\alpha=2}^{M}\varphi_{\alpha}\alpha}/{p_s}$. In this regard, $\mathbb{E}[S]$ can be expressed as 
\begin{equation}\label{es}
\mathbb{E}[S]=\frac{1}{\rho}-\frac{M(1-\rho)^M}{1-(1-\rho)^M}+\frac{\sum_{\alpha=2}^{M}\varphi_{\alpha}\alpha}{p_s}.
\end{equation}}
\subsection{Evaluation of $\mathbb{E}[Y]$ and $\mathbb{E}[Y^2]$}
We note that $W$ and $K$ are independent because $W$ only depends on status arrival rate $\rho$. Recall that $Y=W+K$, we thus have}
\begin{equation}
\label{eq10}
\mathbb{E}[ Y]=\mathbb{E}[W]+\mathbb{E}[K],
\end{equation}
\begin{equation}
\label{eq11}
\mathbb{E}[Y^2]=\mathbb{E}[(W+K)^2]=\mathbb{E}[W^2]+\mathbb{E}[K^2]+2\mathbb{E}[W]\mathbb{E}[K],
\end{equation} where $\mathbb{E}[W]$ and $\mathbb{E}[W^2]$ are given in \eqref{eqw} and \eqref{eqw2}.

It is not easy to directly calculate the distribution of $K$. Inspired by \cite{gu2019timely,chen2016age}, we apply a recursive method to calculate the expectations of $K$ and $K^2$. We note that the term $K$ has two different behaviors depending on whether the status update is successfully received by the AP: \textbf{1)} If the status update is successfully received in its transmission frame, $K=M$ with probability $\gamma p_s$; \textbf{2)} If not, the user needs to wait for the generation of a new status update to transmit. Then, $K=M+W^{'}+K^{'}$ with probability $(1-\gamma p_s)$. Here, $W^{'}$ denotes the waiting time of a new status available to transmit and $K^{'}$ is the remaining frames to successfully transmit such a new status update. We notice that $\mathbb{E}[K]=\mathbb{E}[K^{'}]$ due to the same evolution, and $\mathbb{E}[W]=\mathbb{E}[W^{'}]$ due to the i.i.d. process. Then, $\mathbb{E}[K]$ can be calculated as following,
\begin{equation}
\label{ek}
\begin{split}
\mathbb{E}[K]=\gamma p_sM+(1-\gamma p_s)(M+\mathbb{E}[W]+\mathbb{E}[K]).
\end{split}
\end{equation} After some manipulations, we have 
\begin{equation}
\label{ek2}
\mathbb{E}[K]=\frac{M+(1-\gamma p_s)\mathbb{E}[W]}{\gamma p_s}.
\end{equation} Then, the expectation of inter-departure time of two successive correctly transmitted status updates is 
\begin{equation}
\label{ey}
\mathbb{E}[Y]=\mathbb{E}[W]+\mathbb{E}[K]=\frac{M}{\gamma p_s}+\frac{M(1-\rho)^M}{\gamma p_s\left(1-(1-\rho)^M\right)}.
\end{equation}

As for the expected value of $Y^2$, according to \eqref{eq11}, we need to calculation $\mathbb{E}[K^2]$. Using the same method as calculating $\mathbb{E}[K]$ in \eqref{ek}, $\mathbb{E}[K^2]$ can be calculated as
\begin{equation}\label{mek2}
	\begin{split}
	&\mathbb{E}[K^2]=\gamma p_sM^2+(1-\gamma p_s)\left(\!M^2\!+\!\mathbb{E}[K^2]\!+\!\mathbb{E}[W^2]\right)+
	\\
	&(1-\gamma p_s)\left(2\mathbb{E}[K]\mathbb{E}[W]+2M\left(\mathbb{E}[K]+\mathbb{E}[W]\right)\right).
	\end{split}
\end{equation} By substituting $\mathbb{E}[W^2]$ and $\mathbb{E}[K]$ into \eqref{mek2}, $\mathbb{E}[K^2]$ can be obtained. According to \eqref{eq11}, we have $\mathbb{E}[Y^2]$ as follows,
\begin{equation}
\!\mathbb{E}[Y^2]\!\!=\!\!\frac{M^2}{\gamma p_s}+\frac{2M^2\!\!\left(\!\frac{1}{\gamma p_s}-\left(1-(1-\rho)^M\right)\!\right)}{\gamma p_s\left(1-(1-\rho)^M\right)^2}+\frac{M^2(1-\rho)^M}{\gamma p_s(1-(1-\rho)^M)}
\end{equation}
Based on $\mathbb{E}[Y^2]$, $\mathbb{E}[Y]$ and $\mathbb{E}[S]$, we obtain the expression of AAoI of FSA-RD, given by
\begin{equation}\label{age}
\bar{\Delta}\!\!=\!\!\frac{M}{\gamma p_s\left(1-(1-\rho)^M\right)}-\frac{M(1-\rho)^M}{1-(1-\rho)^M}+\frac{1}{\rho}-\frac{M+1}{2}+\frac{\sum_{\alpha=2}^{M}\varphi_{\alpha}\alpha}{p_s},
\end{equation} where $p_s$ and $\varphi_{\alpha}$ are given in Lemma \ref{lem1} and Corollary \ref{co1}, respectively.
\section{Numerical results}\label{nr}
In this section, both numerical and analytical results of AoI performance of the considered system are presented. We first evaluate the derived analytical expression of AAoI by comparing it with Monte Carlo simulation results. Fig. \ref{fig3} plots the AAoI curves versus reservation probability $\gamma$, considering a multi-access system with $N=30$ and $V=4$. Firstly, we can see from Fig. \ref{fig3} that our analytical results coincide well with the corresponding simulation results, which validates our derivation of $\bar{\Delta}$. Secondly, we can find that when the status generation probability goes large, the optimal reservation probability should be neither too large nor too small for better AoI performance. This is understandable as a larger reservation probability is more likely to cause reservation failures (i.e., more reservation collisions), while a smaller reservation probability makes users less likely to make a reservation. In both cases, fewer users transmit status updates in data slots and some unreserved data slots are wasted, leading to larger network-wide AAoI performance. Besides, when the status generation probability is considerably small, the reservation probability should be as large as possible. The rationale is that when status updates are rarely generated, it is less likely to cause collision even when all users make reservations once they have a status update to transmit. In contrast, small reservation probability may lead to the drop of rarely generated status updates, resulting in larger AoI. Moreover, by comparing the performance of different frame length $M$, we observe that larger $M$ in some cases may cause performance degradation in FSA-RD. This is because when only small number of users make successful reservations, after these users transmitting their status updates, the rest data slots in the frame will be wasted. In this sense, the choice of $M$ should take the successful reservation rate into consideration.
\begin{figure}[!t]
	\centering \scalebox{0.35}{\includegraphics{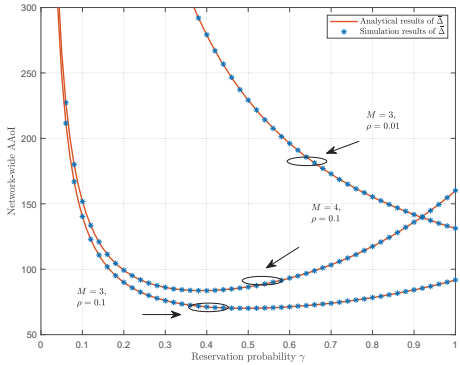}}
	\caption{Network-wide AAoI of a $30$-user system with $V=4$ versus reservation probability $\gamma$ for different frame length $M$ and status generation probability $\rho$.}
	\label{fig3}
	\vspace{-1.5em}
\end{figure}

Table \ref{table1} compares the of FSA-RD under the optimal frame length $M$ and optimal reservation probability $\gamma$, with the simulation results of the optimized AAoI performance of slotted ALOHA. Specifically, given the network setups, we exhaustively search all values of $M$ and $\gamma$ to achieve the minimum value of $\eqref{age}$ for RSA-RD. As for slotted ALOHA, we simulate the age evolution with different values of transmission probability to find its optimal AAoI performance. We can see that when the product $\rho N$ is small, the slotted ALOHA has smaller AAoI than FSA-RD. As $\rho N$ increases through either increasing $N$ or $\rho$, the optimal performance of FSA-RD improves and becomes better than that of slotted ALOHA with up to $29\%$ AAoI reduction. The observation could be caused by the joint effect of the one slot reservation overhead and the frame-based transmission (i.e., status update will be transmitted at the next frame and dropped after its transmission frame). Besides, it is obvious that a larger number of mini-slots (i.e., the value of $V$) in reservation slot leads better AAoI performance in FSA-RD scheme as the larger the value of $V$, the higher the successful reservation probability.  
\begin{table}[h]
	\begin{subtable}[h]{0.4\textwidth}
		\centering
		\begin{tabular}{|c|*{4}{c|} }
		\hline
		& $\rho=0.01$&$\rho=0.02$&$\rho=0.04$&$\rho=0.08$\\
		\hline
		FSA-RD, $V=4$ & $131.16$& $86.46$&\textbf{70.74}&\textbf{70.18}\\
		\hline
		FSA-RD, $V=6$ & $124.06$& \textbf{78.74}&\textbf{60.42}&\textbf{56.47}\\
		\hline
		slotted ALOHA& $110.14$&$82.55$&$81.30$&$80.22$\\
		\hline	
		\end{tabular}	
		\caption{Network-wide AAoI versus packet transmission probability $\rho$ with $N=30$.}
		\label{tab:week1}
	\end{subtable}
	\hfill
	\begin{subtable}[h]{0.4\textwidth}
		\centering
		\begin{tabular}{|c|*{4}{c|} }
		\hline
		& $N=10$&$N=20$&$N=40$&$N=50$\\
		\hline
		FSA-RD, $V=4$ & $37.40$& \textbf{52.12}&\textbf{93.12}&\textbf{116.04}\\
		\hline
		FSA-RD, $V=6$ & $35.12$& \textbf{46.63}&\textbf{75.89}&\textbf{92.90}\\
		\hline
		slotted ALOHA& $31.63$&$53.72$&$107.66$&$136.97$\\
		\hline	
		\end{tabular}	
		\caption{Network-wide AAoI versus total number of users $N$ with $\rho=0.04$.}
		\label{tab:week2}
	\end{subtable}
	\caption{Performance comparison between optimized slotted ALOHA and FSA-RD.}
	\label{table1}
	\vspace{-1.8em}
\end{table}
\section{Conclusions}
We investigated the average age of information (AAoI) performance of Frame Slotted ALOHA with Reservation and Data slots (FSA-RD). A symmetric multi-access network was considered, where each user transmits its randomly generated status updates to an access point in a framed manner. To gain insights in optimizing the AAoI of FSA-RD, we derived a closed-form expression of the AAoI under FSA-RD. The correctness of the analytical AAoI was verified by comparing with the simulated AAoI in numerical results. Numerical results showed that when the status arrival rate of the network becomes large, FSA-RD with optimized reservation probability and frame size can achieve up to $29\%$ AAoI reduction, compared with optimized slotted ALOHA. Future work will include the investigation of the AAoI of FSA-RD with variable frame length and different contention schemes for making reservations. 


%

\ifCLASSOPTIONcaptionsoff
  \newpage
\fi



%
\bibliography{ref}

\begin{thebibliography}{10}
\providecommand{\url}[1]{#1}
\csname url@samestyle\endcsname
\providecommand{\newblock}{\relax}
\providecommand{\bibinfo}[2]{#2}
\providecommand{\BIBentrySTDinterwordspacing}{\spaceskip=0pt\relax}
\providecommand{\BIBentryALTinterwordstretchfactor}{4}
\providecommand{\BIBentryALTinterwordspacing}{\spaceskip=\fontdimen2\font plus
\BIBentryALTinterwordstretchfactor\fontdimen3\font minus
  \fontdimen4\font\relax}
\providecommand{\BIBforeignlanguage}[2]{{%
\expandafter\ifx\csname l@#1\endcsname\relax
\typeout{** WARNING: IEEEtran.bst: No hyphenation pattern has been}%
\typeout{** loaded for the language `#1'. Using the pattern for}%
\typeout{** the default language instead.}%
\else
\language=\csname l@#1\endcsname
\fi
#2}}
\providecommand{\BIBdecl}{\relax}
\BIBdecl

\bibitem{kaul2012real}
S.~Kaul, R.~Yates, and M.~Gruteser, ``Real-time status: How often should one
  update?'' in \emph{2012 Proceedings IEEE INFOCOM}.\hskip 1em plus 0.5em minus
  0.4em\relax IEEE, 2012, pp. 2731--2735.

\bibitem{yates2017status}
R.~D. Yates and S.~K. Kaul, ``Status updates over unreliable multiaccess
  channels,'' in \emph{2017 IEEE International Symposium on Information Theory
  (ISIT)}.\hskip 1em plus 0.5em minus 0.4em\relax IEEE, 2017, pp. 331--335.

\bibitem{massiot}
Ericsson, ``Cellular networks for massive {I}o{T},'' \emph{Tech. Rep.}, Jan
  2016.

\bibitem{kadota2018scheduling}
I.~Kadota, A.~Sinha, E.~Uysal-Biyikoglu, R.~Singh, and E.~Modiano, ``Scheduling
  policies for minimizing age of information in broadcast wireless networks,''
  \emph{IEEE/ACM Transactions on Networking (TON)}, vol.~26, no.~6, pp.
  2637--2650, 2018.

\bibitem{kadota2019minimizing}
I.~{Kadota} and E.~{Modiano}, ``Minimizing the age of information in wireless
  networks with stochastic arrivals,'' \emph{IEEE Transactions on Mobile
  Computing}, pp. 1--1, 2019.

\bibitem{8437712}
Y.-P. Hsu, ``Age of information: Whittle index for scheduling stochastic
  arrivals,'' in \emph{2018 IEEE International Symposium on Information Theory
  (ISIT)}, 2018, pp. 2634--2638.

\bibitem{bedewy2019age}
A.~M. Bedewy, Y.~Sun, S.~Kompella, and N.~B. Shroff, ``Age-optimal sampling and
  transmission scheduling in multi-source systems,'' in \emph{Proceedings of
  the Twentieth ACM International Symposium on Mobile Ad Hoc Networking and
  Computing}, 2019, pp. 121--130.

\bibitem{unma}
R.~D. Yates and S.~K. Kaul, ``Status updates over unreliable multiaccess
  channels,'' in \emph{2017 IEEE International Symposium on Information Theory
  (ISIT)}, 2017, pp. 331--335.

\bibitem{slotaloha}
O.~T. Yavascan and E.~Uysal, ``Analysis of slotted aloha with an age
  threshold,'' \emph{IEEE Journal on Selected Areas in Communications},
  vol.~39, no.~5, pp. 1456--1470, 2021.

\bibitem{gurandom}
H.~Chen, Y.~Gu, and S.-C. Liew, ``Age-of-information dependent random access
  for massive {I}o{T} networks,'' in \emph{IEEE INFOCOM 2020 - IEEE Conference
  on Computer Communications Workshops (INFOCOM WKSHPS)}, 2020, pp. 930--935.

\bibitem{xrchen}
X.~Chen, K.~Gatsis, H.~Hassani, and S.~S. Bidokhti, ``Age of information in
  random access channels,'' in \emph{2020 IEEE International Symposium on
  Information Theory (ISIT)}, 2020, pp. 1770--1775.

\bibitem{9488897}
I.~Kadota and E.~Modiano, ``Age of information in random access networks with
  stochastic arrivals,'' in \emph{IEEE INFOCOM 2021 - IEEE Conference on
  Computer Communications}, 2021, pp. 1--10.

\bibitem{maatouk2019minimizing1}
A.~{Maatouk}, M.~{Assaad}, and A.~{Ephremides}, ``Minimizing the age of
  information in a csma environment,'' in \emph{2019 International Symposium on
  Modeling and Optimization in Mobile, Ad Hoc, and Wireless Networks (WiOPT)},
  2019, pp. 1--8.

\bibitem{modernra}
A.~Munari, ``Modern random access: An age of information perspective on
  irregular repetition slotted aloha,'' \emph{IEEE Transactions on
  Communications}, vol.~69, no.~6, pp. 3572--3585, 2021.

\bibitem{Bing2002}
B.~Bing, \emph{Broadband wireless access}.\hskip 1em plus 0.5em minus
  0.4em\relax Springer Science \& Business Media, 2000.

\bibitem{roberts1973dynamic}
L.~G. Roberts, ``Dynamic allocation of satellite capacity through packet
  reservation,'' in \emph{Proceedings of the June 4-8, 1973, national computer
  conference and exposition}, 1973, pp. 711--716.

\bibitem{szpankowski1983analysis}
W.~Szpankowski, ``Analysis and stability considerations in a reservation
  multiaccess system,'' \emph{IEEE transactions on communications}, vol.~31,
  no.~5, pp. 684--692, 1983.

\bibitem{CASARESGINER201915}
V.~Casares-Giner, J.~Martinez-Bauset, and C.~Portillo, ``Performance evaluation
  of framed slotted aloha with reservation packets and succesive interference
  cancelation for m2m networks,'' \emph{Computer Networks}, vol. 155, pp.
  15--30, 2019.

\bibitem{gu2019timely}
Y.~Gu, H.~Chen, Y.~Zhou, Y.~Li, and B.~Vucetic, ``Timely status update in
  internet of things monitoring systems: An age-energy tradeoff,'' \emph{IEEE
  Internet of Things Journal}, 2019.

\bibitem{chen2016age}
K.~Chen and L.~Huang, ``Age-of-information in the presence of error,'' in
  \emph{2016 IEEE International Symposium on Information Theory (ISIT)}.\hskip
  1em plus 0.5em minus 0.4em\relax IEEE, 2016, pp. 2579--2583.

\end{thebibliography}
\bibliographystyle{IEEEtran}

%




\end{document}